\documentclass[12pt]{article}

\usepackage[utf8]{inputenc}
\usepackage[T1]{fontenc}
\usepackage{lmodern}
\usepackage[spanish]{babel}
\addto\captionsspanish{}

\usepackage{amsmath, amssymb, amsthm}
\usepackage{geometry}
\usepackage{setspace}
\usepackage{bm}
\usepackage{graphicx}
\usepackage{caption}
\usepackage{microtype}
\usepackage{hyperref}

\newtheorem{lemma}{Lemma}

\decimalpoint
\numberwithin{equation}{section}

\title{Dynamic Macroeconomics with Multiple Regimes}
\author{Jorge R. Chávez F.\\ Pontificia Universidad Católica del Perú \\ Email: jrchavez@pucp.edu.pe \\ Working Paper}
\date{\today}

\newtheorem{definition}{Definition}
\newtheorem{theorem}{Theorem}

\addto\captionsspanish{}

\usepackage[authoryear]{natbib}
\begin{document}
\maketitle

\begin{abstract}
Macroeconomic dynamics is typically modeled under the assumption that the economy evolves according to a single invariant law of motion. This paper shows that this assumption imposes a structural restriction. We develop Dynamic Macroeconomics with Multiple Regimes (DMR), a framework in which economic evolution is governed by multiple regime-specific propagation operators. As a result, trajectories arise from ordered compositions of heterogeneous operators rather than from the iteration of a single mapping.

We establish three structural results. First, invariant-law and regime-dependent systems are not topologically equivalent. Second, regime dependence is dynamically irreducible: it cannot be eliminated through any injective transformation of the state space. Third, whenever regime operators fail to commute, there exists no map $F:\mathbb{R}^n\to\mathbb{R}^n$ whose iterates reproduce all regime-admissible trajectories.

These results establish a structural separation between invariant-law macroeconomics and regime-dependent dynamics, implying that stability, policy evaluation, and structural characterization must be conducted at the level of interacting propagation operators rather than within a single invariant mapping.
\end{abstract}

\section{Introduction}
\label{sec:DMR-Introduction}

Modern macroeconomic analysis is built on a common architectural premise: the economy is assumed to evolve according to one invariant law of motion, that is, a single propagation operator governing all state transitions. Since the Lucas critique \citep{Lucas1976}, modern macroeconomics has
emphasized structural microfoundations and policy evaluation within a stable transmission mechanism. The DSGE framework consolidated this paradigm, providing a unified language for intertemporal equilibrium dynamics under a fixed propagation operator, typically analyzed through local approximations around stationary equilibria \citep{KydlandPrescott1982, KingRebelo1999, Woodford2003}.
In applied work, this approach has proven highly effective in organizing empirical evidence and policy analysis, including canonical monetary models \citep{ClaridaGaliGertler1999, Woodford2003} and medium-scale quantitative
frameworks \citep{ChristianoEichenbaumEvans2005, SmetsWouters2007}.

This invariant-law representation imposes a strong structural restriction: all macroeconomic dynamics are interpreted as realizations of a single invariant law of motion (equivalently, a single propagation operator) defined on a single state space. Fluctuations, nonlinearities, and occasionally binding constraints may be incorporated, yet
they remain embedded within a unified system. Even when parameters shift or latent regimes are introduced, the underlying architecture continues to be that of a single global law of motion. This restriction is typically implicit, but it has far-reaching consequences for how macroeconomic dynamics is modeled and interpreted. As a consequence, the class of admissible dynamics is restricted to those that can be generated by repeated application of a single operator, regardless of how rich its parametric or state-dependent structure may be.

As shown in this paper, this restriction is not without loss of generality: the class of invariant-law models is strictly smaller than the class of regime-dependent systems, in the sense that the latter cannot, in general, be represented as the iteration of a single law of motion. This is not a matter of quantitative refinement, but of structural misspecification: there exist policy environments in which prescriptions that are optimal within each regime, when evaluated in isolation, generate globally unstable dynamics once regime interactions are taken into account.

A wide range of macroeconomic episodes, however, cannot be naturally interpreted as perturbations of a fixed propagation mechanism. Financial crises involving breakdowns in intermediation, abrupt changes in monetary or fiscal regimes, episodes of institutional disruption, or large-scale systemic events such as wars or the COVID-19 pandemic are frequently associated with
reconfigurations of the underlying propagation mechanisms through which economic activity evolves. Historical transitions across monetary systems—such as the shift from the gold standard to Bretton Woods and later to modern fiat regimes—also reflect changes in the underlying propagation structure \citep{Eichengreen2008}.
These episodes are not well described as distortions within a stable mechanism, but as transitions across distinct propagation structures.

The global financial crisis of 2008 provides a clear illustration of this limitation. In a standard invariant-law representation, financial variables such as leverage, asset prices, and funding conditions are treated as state variables evolving under a single propagation mechanism, with shocks or frictions affecting the trajectory within that mechanism. A regime-dependent interpretation is fundamentally different. Normal financial intermediation and crisis conditions correspond to distinct propagation operators: under normal conditions, leverage and asset prices evolve through standard credit expansion, whereas under crisis conditions funding constraints, fire sales, and balance-sheet deterioration alter both the direction and intensity of transmission. The resulting instability does not arise from either mechanism considered in isolation, but from their ordered interaction. This is precisely the type of environment in which an invariant-law framework may misdiagnose stability by failing to represent the composition of distinct propagation structures.

The COVID-19 pandemic provides a complementary illustration. In a standard invariant-law representation, the pandemic is modeled as a large shock to demand, supply, or preferences within a fixed propagation mechanism. A regime-dependent interpretation is fundamentally different. The pandemic activates a distinct propagation operator, under which mobility restrictions, sectoral shutdowns, and disruptions to production networks alter the mapping from economic activity to future outcomes. Under these conditions, conventional demand-side responses do not transmit through the economy in the usual way, as the link between expenditure, production, and employment is itself modified. The subsequent reopening phase constitutes yet another propagation regime, reflecting partial reorganization of production, shifts in sectoral composition, and persistent changes in behavior. The resulting dynamics are therefore generated by the ordered interaction of distinct propagation mechanisms, rather than by the iteration of a single law of motion. This is precisely the type of environment in which invariant-law representations may mischaracterize both adjustment dynamics and the effects of policy interventions.

The preceding examples reflect a broader idea: economic dynamics may involve structural transformation. This idea is not new. Classical contributions have emphasized discontinuities, reorganization, and instability as intrinsic features of economic systems. Schumpeter’s theory of creative destruction highlights the role of structural reconfiguration in economic evolution \citep{Schumpeter1942}. Keynesian and post-Keynesian traditions, including Minsky’s financial instability hypothesis, stress the endogenous emergence of qualitatively distinct dynamic regimes \citep{Minsky1986}. Related insights also appear in the analysis of
major economic disruptions. \citet{Bernanke1983} shows that financial crises may alter the channels through which shocks propagate, while the financial accelerator literature formalizes amplification mechanisms through which balance-sheet deterioration changes macroeconomic transmission
\citep{BernankeGertlerGilchrist1999}. Despite these contributions, existing frameworks do not provide a general formulation in which the law of motion itself varies as a primitive object.

Modern macroeconomics has incorporated regime dependence through several approaches while preserving an invariant-law architecture. Markov-switching methods, originating in \citet{Hamilton1989}, allow parameters or states to evolve according to stochastic processes, and have been embedded in rational-expectations and DSGE models \citep{DavigLeeper2007,
FarmerWaggonerZha2011}. Nonlinear models with occasionally binding
constraints generate state-contingent dynamics within a unified operator framework \citep{GuerrieriIacoviello2015}. Macro-finance models emphasize endogenous amplification mechanisms that may resemble regime behavior \citep{BrunnermeierSannikov2014}. While these approaches expand the analytical toolkit, they preserve a common structural restriction: all admissible dynamics are generated by a single global mapping.

This distinction is fundamental. Standard approaches incorporate regime dependence within an invariant-law representation. In Markov-switching models, dynamics take the form $x_{t+1}=F(x_t,s_t)$, where a single mapping governs transitions conditional on the realized regime. Alternatively, regime changes may be interpreted as parameter variation within a fixed propagation mechanism, as in $x_{t+1}=F(x_t;\theta)$. In both cases, all admissible dynamics are generated by a single operator. In contrast, the framework developed here treats the law of motion itself as regime-specific. The economy evolves under a family of distinct operators $\{F_s\}_{s\in\mathcal{S}}$, and global dynamics arise from their ordered compositions. Structural multiplicity is therefore not reducible to invariant-law representations.

This paper shows that macroeconomic dynamics cannot, in general, be represented within an invariant-law architecture. The limitation is structural rather than a matter of modeling flexibility: when economic evolution is generated by multiple propagation mechanisms whose effects depend on their ordering—that is, when the associated operators fail to commute— the resulting set of admissible trajectories cannot be represented by the iteration of a single invariant law of motion.

We study systems in which the economy evolves under multiple distinct propagation operators. Each regime is associated with its own law of motion, and global evolution is determined by the continuous economic state together with the realized sequence of regimes Trajectories are thus generated by ordered compositions of heterogeneous operators rather than by iteration of a single mapping. This defines a distinct dynamic architecture in which structural change operates through variation in the law of motion itself.

The contribution is structural. We formalize two classes of discrete-time dynamical systems. Class $\mathcal{C}$ represents invariant-law macroeconomic systems: continuous dynamics on a connected Euclidean phase space generated by iteration of a single operator. Class $\mathcal{R}$ represents regime-dependent systems: dynamics governed by ordered compositions drawn from a finite family of regime-specific operators.

Three structural results follow. First, invariant-law systems and regime-dependent systems belong to distinct topological conjugacy classes, and therefore do not constitute equivalent representations of the same economic dynamics. Second, regime dependence is dynamically irreducible: it cannot be eliminated through continuous reduction without collapsing the heterogeneity of the propagation structure. Third, whenever regime-specific operators fail to commute, there exists no map $F:\mathbb{R}^n\to\mathbb{R}^n$ such that $x_{t+1}=F(x_t)$ reproduces the full set of regime-admissible trajectories through the iteration of a single invariant law of motion. This establishes an impossibility theorem: the semigroup generated by regime-dependent propagation cannot be generated by iteration of any single operator.

These results imply that regime-dependent systems are not a parametric extension of invariant-law frameworks but a distinct dynamic architecture. Once structural multiplicity is admitted, stability, policy evaluation, and structural analysis must be conducted at the level of interacting operators and their admissible compositions rather than within a single invariant
mapping. In particular, stability can no longer be characterized solely through the properties of a single propagation mechanism, and cannot be fully understood through local analysis of individual regimes. In this sense, the framework developed here provides a set of theoretical foundations for a regime-based approach to macroeconomic analysis.

This shift is not methodological but structural: it changes the object of analysis itself.

The remainder of the paper is organized as follows. Section \ref{sec:DMR-Framework} defines invariant-law and regime-dependent systems within a unified structural framework. Section \ref{sec:DMR-KillingExample} analyzes stability under structural switching and illustrates how regime interaction alters qualitative dynamics. Section \ref{sec:structural_separation} establishes the topological non-equivalence of invariant-law and regime-dependent systems and the irreducibility of regime dependence. Section \ref{sec:DMR-ImposibilityTheorem} proves the impossibility of invariant-law representation under non-commuting regime operators. Section \ref{sec:DMR-Conclusion} summarizes the main implications.

\section{A Structural Framework with\\ Multiple Laws of Motion}
\label{sec:DMR-Framework}

This section introduces the architecture underlying the regime-dependent framework developed in this paper by identifying the ingredients required to analyze macroeconomic dynamics when the law of motion itself may change over time. The central departure from standard invariant-law models is the relaxation of invariance. Rather than assuming that the economy evolves under a single operator with fixed propagation mechanisms, we allow for the coexistence of multiple mappings. In this framework, regime change is not interpreted as a large disturbance within a given model; it corresponds to a change in the law of motion itself.

Time is discrete, $t = 0,1,2,\ldots$. At each date $t$, the economy is characterized by a state $x_t \in \mathbb{R}^n$, which denotes a vector of continuous macroeconomic variables. The variable $s_t \in \mathcal{S}$ indexes the structural regime under which macroeconomic propagation takes place at date $t$. The set of regimes is finite and given by
\begin{equation*}
\mathcal{S} = \{1,\ldots,|\mathcal{S}|\}.
\end{equation*}

The vector $x_t$ may represent output gaps, inflation, debt ratios,
financial conditions, capital stocks, or other aggregate variables. The regime variable $s_t$ identifies the structural environment governing macroeconomic propagation at date $t$.

A regime-dependent macroeconomic system consists of a family of regime-specific propagation operators
\begin{equation*}
F_s : \mathbb{R}^n \to \mathbb{R}^n, \quad s \in \mathcal{S},
\end{equation*}
with state dynamics
\begin{equation}
\label{eq:state_dynamics}
x_{t+1} = F_{s_t}(x_t).
\end{equation}

The formulation $x_{t+1}=F_{s_t}(x_t)$ does not impose any restriction on the sequence of regime realizations. The sequence $\{s_t\}$ is taken as an exogenously specified path in $\mathcal{S}$, which may exhibit structural shifts, remain fixed, or evolve according to an arbitrary process.

This modeling choice departs from two common approaches in the macroeconomic literature. In models of the form
$$
x_{t+1} = F(x_t;\theta),
$$
regime changes are typically interpreted as variations in parameters within a single invariant propagation mechanism. In contrast, Markov-switching specifications allow for regime-dependent mappings $F_s$, but impose a probabilistic structure on regime transitions, with $\{s_t\}$ driven by a Markov chain. These models are typically represented as a single law of motion on an augmented state space, $x_{t+1} = \bar F(x_t, s_t)$, thereby preserving the interpretation of a unified propagation mechanism.

The framework adopted here does not impose any probabilistic structure on the evolution of $\{s_t\}$, but allows for such specifications as a special case. More importantly, it does not interpret regime dependence as parametric variation within a single mapping. Instead, the defining feature of the framework is the existence of multiple propagation laws, not the mechanism governing regime transitions.

\medskip

We formalize this framework as follows.

\begin{definition}\emph{(Regime-dependent system)}
A regime-dependent macroeconomic system is defined by a finite set of regimes $\mathcal{S}$ and a corresponding family of mappings $\{F_s\}_{s \in \mathcal{S}}$, where each $F_s : \mathbb{R}^n \to \mathbb{R}^n$ maps states into next-period states under regime $s$, and the evolution of the system is given by \eqref{eq:state_dynamics} for any realized sequence $\{s_t\} \subset \mathcal{S}$.
\end{definition}

Under this formulation, the evolution of the economy is determined by successive composition of regime-dependent operators. For a given initial condition $x_0$ and regime sequence $\{s_t\}$, the resulting trajectory is given by
\begin{equation}
\label{eq:DMR-FtF0}
x_{t+1} = F_{s_{t}} \circ \cdots \circ F_{s_0}(x_0).
\end{equation}
Each regime is thus associated with a distinct structural mapping, and macroeconomic dynamics arise from the ordered interaction of potentially heterogeneous laws of motion. Regimes are not interpreted as large realizations of shocks within a fixed system, nor as parameter perturbations within an invariant mapping. Instead, regime changes modify the mapping itself, altering the functional form that governs intertemporal propagation.

To analyze local propagation properties within each regime, we assume that each mapping $F_s$ is continuously differentiable. For each $s \in \mathcal{S}$, let $D F_s(x)$ denote the Jacobian matrix of $F_s$ at state $x$, capturing the local propagation structure associated with regime $s$.

While such local properties are defined at the level of individual operators, the overall dynamics of the system are governed by their ordered composition along the realized sequence $\{s_t\}$. As a result, properties of the global system cannot, in general, be inferred from the analysis of individual regimes in isolation.

This distinction is economically significant. Sequences of policy
regimes, institutional configurations, or financial environments may generate qualitatively different macroeconomic trajectories depending on their order, even when each regime, taken in isolation, exhibits stable
propagation properties. As illustrated in Section~3, compositions of individually stable regimes may fail to preserve stability when combined, reflecting the non-trivial interaction between distinct laws of motion.

Standard invariant-law macroeconomic models correspond to the special configuration in which
\begin{equation*}
F_s \equiv F \quad \text{for all } s \in \mathcal{S}.
\end{equation*}

In that case, economic evolution is governed by iteration of a single propagation operator. Structural invariance therefore appears as a degenerate configuration within the present architecture, in which operator heterogeneity is absent.

Once structural invariance is relaxed, global behavior is determined by ordered compositions of distinct operators and therefore by the realized
sequence of regimes. The core architectural distinction is thus between dynamics generated by iteration of a single mapping and dynamics generated by ordered composition across multiple, potentially non-commuting, structural laws of motion.

\section{Stability in Regime-Dependent Systems}
\label{sec:DMR-KillingExample}

The notion of stability in macroeconomic dynamics is traditionally
formulated with respect to a single invariant law of motion.
In that setting, stability is defined pointwise: given a mapping $x_{t+1} = F(x_t)$, a fixed point \(x^*\) satisfying
$$
F(x^*) = x^*
$$
is locally stable if trajectories starting sufficiently close remain close, and asymptotically stable if they, in addition, converge to \(x^*\).

In a regime-dependent system, however, the dynamics are governed by
a family of operators $\{F_s\}_{s \in \mathcal{S}}$, and trajectories are generated by ordered compositions given by \eqref{eq:DMR-FtF0}

As a result, stability can no longer be defined solely in terms of a single operator evaluated at a point. Instead, it is not a property of any operator evaluated at a point, but of the semigroup generated by $\{F_s\}$ along admissible sequences of regimes that the system may experience.

\medskip

To analyze stability properties, consider a common steady state $x^*$ such that $F_s(x^*) = x^*$ for all $s \in \mathcal{S}$. Let $U$ be a neighborhood of $x^*$ in which each mapping $F_s$ admits a first-order approximation. For each regime $s \in \mathcal{S}$, define the Jacobian matrix
\begin{equation*}
A_s = D F_s(x^*).
\end{equation*}
Then, for $x \in U$,
\begin{equation*}
F_s(x) = x^* + A_s (x - x^*) + o(\|x - x^*\|).
\end{equation*}
Unlike the invariant-law case, this linearization does not produce a single local operator, but a family $\{A_s\}_{s \in \mathcal{S}}$. As a result, even in an arbitrarily small neighborhood of $x^*$, the local dynamics are not governed by a fixed linear mapping. Instead, they are approximated by a sequence of regime-dependent linear operators:
\begin{equation*}
x_{t+1} - x^* \approx A_{s_t}(x_t - x^*), \quad x_t \in U.
\end{equation*}
Thus, local propagation is determined by ordered compositions of matrices drawn from $\{A_s\}$. Repeated composition along a regime sequence yields the local approximation
\begin{equation*}
x_{t+1} - x^* \approx A_{s_{t}} \cdots A_{s_0} (x_0 - x^*),
\end{equation*}
as long as the trajectory remains in $U$. This observation implies a fundamental departure from the standard analysis. Under an invariant law of motion, the Jacobian evaluated at $x^*$ fully characterizes local stability. Under structural regime change, the linearization yields not a single operator, but a family of operators whose interaction determines local behavior. Stability therefore ceases to be a pointwise property and becomes inherently sequential: it depends on the entire path of regimes, even in a local neighborhood.

\vspace{0.1in}

A natural object to characterize local stability of the linearized system is the joint spectral radius (JSR) of the set $\mathcal{A} = \{A_s : s \in \mathcal{S}\}$, defined as
$$
\rho(\mathcal{A}) := \limsup_{T \to \infty}
\sup_{(s_0,\dots,s_{T-1})}
\| A_{s_{T-1}} \cdots A_{s_0} \|^{1/T}.
$$
Importantly, the joint spectral radius is not a property of any single regime, but of the semigroup generated by the family $\{A_s\}$. It therefore captures
precisely the compositional nature of local dynamics under regime switching, rather than the stability properties of individual operators.

This object is standard in the analysis of switched and regime-dependent systems (see, e.g., \cite{Liberzon2003,Jungers2009}). If $\rho(\mathcal{A} < 1$, then the linearized system is uniformly exponentially stable under arbitrary switching, implying that
$$
\| x_t - x^* \| \leq M \alpha^t \| x_0 - x^* \|
$$
for some $M > 0,\ \alpha \in (0,1)$, and all admissible sequences $\{s_t\}$, provided the trajectory remains in $U$.

\medskip

It is crucial to emphasize that this characterization is inherently local. The matrices \(A_s\) arise from linearization at \(x^*\), and the joint spectral radius captures behavior only in a neighborhood of that point. It does not characterize the global dynamics of the nonlinear system, which depend on the full family of regime-specific operators \(\{F_s\}\) and their ordered compositions, beyond any local linear approximation.

\medskip

To illustrate these points, consider the following example, which isolates compositional effects in a simple macro-financial environment.

\newpage

\noindent \textbf{Example (Collateral values and borrowing capacity under\\ alternating financial regimes).}

\vspace{0.1in}

Let $x_t=(q_t,b_t)^\top$, where $q_t$ denotes the value of collateralizable assets and $b_t$ the effective borrowing capacity of the private sector. Consider an environment in which the direction of financial transmission depends on the prevailing regime.

In periods of normal financial conditions ($N$), abundant credit supports asset demand and sustains collateral values. In such environments, lending is primarily driven by balance-sheet capacity, funding conditions, and risk appetite, so that marginal changes in collateral values do not materially alter borrowing capacity. By contrast, under a constrained regime ($C$), borrowing capacity becomes more directly limited by the value of collateral, as lending conditions depend on the borrower’s balance-sheet position; when constraints bind, small changes in collateral values directly affect how much agents can borrow, while asset prices are largely driven by deleveraging and liquidity pressures rather than by marginal credit expansion. This asymmetry is standard in macro-financial analysis: in one regime, credit conditions support asset prices, while in the other, asset prices constrain credit.

\medskip

This difference can be summarized in terms of the relative strength of transmission channels. In normal conditions, borrowing capacity is relatively insensitive to marginal changes in collateral values, so that
$$
\frac{\partial b}{\partial q} \approx 0,
\qquad
\frac{\partial q}{\partial b} > 0,
$$
whereas under stressed conditions the pattern reverses: borrowing becomes tightly linked to collateral values, while the effect of credit on asset prices weakens,
$$
\frac{\partial b}{\partial q} > 0,
\qquad
\frac{\partial q}{\partial b} \approx 0.
$$
The preceding discussion motivates a regime-dependent specification in which the dominant transmission channel differs across regimes, capturing, in a stylized and regime-dependent form, balance-sheet effects and collateral constraints studied in the macro-financial literature (e.g., \cite{BernankeGertlerGilchrist1999, KiyotakiMoore1997}). Accordingly, the specification below emphasizes the dominant transmission channel in each regime.\footnote{The specification abstracts from secondary cross-effects within each regime in order to isolate the dominant direction of transmission. The qualitative result does not depend on this simplification.} Consider the following regime-specific laws of motion.

Under normal financial conditions ($N$),
\begin{equation}
\label{eq:collateral_N}
\begin{aligned}
q_{t+1} &= \alpha q_t
        + \mu\left(\frac{b_t}{1+b_t} - \frac{b^*}{1+b^*}\right)
        + \bar q, \\
b_{t+1} &= \beta b_t + \bar b,
\end{aligned}
\end{equation}
where $\alpha,\beta \in (0,1)$ capture intrinsic propagation, $\bar q>0$ and $\bar b>0$ are baseline components, and $\mu>0$ measures the effect of borrowing capacity on collateral values.

Under constrained regime ($C$),
\begin{equation}
\label{eq:collateral_C}
\begin{aligned}
q_{t+1} &= \alpha q_t + \bar q, \\
b_{t+1} &= \beta b_t
        + \nu\left(\frac{q_t}{1+q_t} - \frac{q^*}{1+q^*}\right)
        + \bar b,
\end{aligned}
\end{equation}
where $\nu>0$ measures the effect of collateral values on borrowing capacity.

The centered nonlinear terms ensure that both regimes share the same stationary reference point. A common fixed point $x^*=(q^*,b^*)$ is therefore defined by
$$
F_N(q^*,b^*)=(q^*,b^*), \qquad F_C(q^*,b^*)=(q^*,b^*).
$$

From \eqref{eq:collateral_N}--\eqref{eq:collateral_C}, it follows immediately that
$$
q^*=\frac{\bar q}{1-\alpha},
\qquad
b^*=\frac{\bar b}{1-\beta}.
$$

Thus, the model admits a common interior fixed point whenever $\alpha,\beta\in(0,1)$ and $\bar q,\bar b>0$.

\medskip

To analyze local stability, define the Jacobian matrices evaluated at the common fixed point:
\[
A_s = D F_s(x^*), \qquad s\in\{N,C\}.
\]

These are given by
$$
A_N=
\begin{pmatrix}
\alpha & \displaystyle \frac{\mu}{(1+b^*)^2} \\
0 & \beta
\end{pmatrix},
\qquad
A_C=
\begin{pmatrix}
\alpha & 0 \\
\displaystyle \frac{\nu}{(1+q^*)^2} & \beta
\end{pmatrix}.
$$

Each Jacobian is triangular. Hence their eigenvalues are simply $\alpha$ and $\beta$, so each regime is locally stable in isolation whenever $\alpha,\beta\in(0,1)$.

\medskip

For illustrative parameter values
$$
\alpha=0.8,
\qquad
\beta=0.8,
\qquad
\bar q=0.2,
\qquad
\bar b=0.2,
\qquad
\mu=1.6,
\qquad
\nu=1.6,
$$
the common fixed point is
$$
q^*=\frac{0.2}{1-0.8}=1,
\qquad
b^*=\frac{0.2}{1-0.8}=1.
$$

Substituting into the Jacobians yields
$$
A_N=
\begin{pmatrix}
0.8 & 0.4 \\
0 & 0.8
\end{pmatrix},
\qquad
A_C=
\begin{pmatrix}
0.8 & 0 \\
0.4 & 0.8
\end{pmatrix}.
$$

Their spectral radii satisfy
$$
\rho(A_N)=0.8<1,
\qquad
\rho(A_C)=0.8<1.
$$

Their ordered composition is
$$
A_C A_N=
\begin{pmatrix}
0.64 & 0.32 \\
0.32 & 0.80
\end{pmatrix}.
$$

The characteristic polynomial of $A_C A_N$ is
$$
\lambda^2 - 1.44\lambda + 0.4096 = 0,
$$
whose roots are
$$
\lambda_1 \approx 1.0498,
\qquad
\lambda_2 \approx 0.3902.
$$
Hence
$$
\rho(A_C A_N)\approx 1.0498 > 1.
$$

Therefore, although each regime is locally stable on its own, the two-period linearized dynamics generated by their ordered composition are locally unstable.

\medskip

The economic mechanism is transparent. Under regime $N$, borrowing capacity affects collateral values:
$$
b \longrightarrow q.
$$
Under regime $C$, collateral values affect borrowing capacity:
$$
q \longrightarrow b.
$$
The amplification mechanism arises only through the ordered composition
$$
b \;\xrightarrow{\,N\,}\; q \;\xrightarrow{\,C\,}\; b.
$$

Accordingly, the local dynamics under recurrent alternation are governed by
$$
x_{t+2}-x^* \approx (A_C A_N)(x_t-x^*),
$$
and therefore
$$
x_{t+2k}-x^* \approx (A_C A_N)^k (x_0-x^*).
$$
Since $\rho(A_C A_N)>1$, it follows that
$$
\|(A_C A_N)^k(x_0-x^*)\|\to\infty
\qquad \text{as } k\to\infty,
$$
for generic initial conditions sufficiently close to $x^*$.

\medskip

This example shows that regime-wise stability does not imply stability under regime switching. The instability is not a property of any regime considered in isolation, but arises from the ordered composition of structurally distinct propagation mechanisms. In this sense, local behavior is inherently compositional: it is determined not by the properties of individual operators, but by their interaction along admissible sequences of regimes.

This has a direct economic implication. An analytical framework based on a single invariant law of motion would conclude that the system is locally stable, since each regime admits a stable fixed point. This conclusion is misleading: instability arises precisely from the interaction between regimes, through feedback mechanisms that are not captured by regime-by-regime analysis. As a result, policy or risk assessments based on invariant-law reasoning may systematically underestimate the possibility of endogenous amplification and instability, even in environments where all observable regimes appear individually stable.

More generally, this type of mechanism arises whenever different regimes activate distinct directions of transmission between state variables, so that the associated propagation operators do not commute. In such environments, stability is not a regime-wise property, but a property of the regime-dependent propagation architecture. This type of configuration is typical in environments with balance-sheet effects that reverse direction across financial conditions

In the context of this example, the misdiagnosis has a concrete economic content. Since both borrowing capacity $b_t$ and collateral values $q_t$ are stable within each regime, a regime-by-regime assessment would classify leverage dynamics as locally stable. This overlooks the feedback loop generated under regime alternation. The interaction $b \rightarrow q \rightarrow b$ induces an amplification mechanism that is absent within any single regime, implying that leverage dynamics that appear stable in isolation may, in fact, become locally destabilizing once regime transitions are taken into account.

\section{Topological and Structural Separation from\\ Invariant-Law Systems}
\label{sec:structural_separation}

Section~3 showed that the behavior of a regime-dependent system under the composition of regimes may differ fundamentally from its behavior when each regime is analyzed in isolation. In particular, the example established that properties such as stability, which hold at the level of individual regimes, need not be preserved under their composition.

This feature is not a technical complication but a structural one. It suggests that regime-dependent systems may not, in general, be interpreted as perturbations or extensions of invariant-law systems. The relevant question is therefore not one of approximation, but of equivalence: whether both frameworks represent the same class of dynamical objects under different coordinates.

We now show that they are not. The separation is architectural and operates along two complementary dimensions. First, invariant-law systems and regime-dependent systems are not topologically equivalent: no system governed by multiple regimes can be transformed into a single-law system through a homeomorphism. Second, regime-dependent dynamics are dynamically irreducible: the regime index cannot be eliminated from the description of the dynamics without collapsing heterogeneous propagation mechanisms into a single operator

To formalize the first claim, we adopt topological conjugacy as the notion of equivalence and recall the standard definition of a discrete-time dynamical system.

\begin{definition}\em[Discrete-time dynamical system]
A discrete-time dynamical system is a pair $(X,T)$ where $X$ is a topological space and $T:X\to X$ is a continuous mapping. Given $x_0\in X$, the induced trajectory is
$$
x_t=T^t(x_0), \qquad t\ge 0.
$$
\end{definition}

Two systems $(X,T)$ and $(Y,S)$ are \emph{topologically conjugate} if there exists a homeomorphism $h:X\to Y$ such that
\begin{equation}
h\circ T = S\circ h.
\label{eq:conjugacy}
\end{equation}
This condition means that the two systems generate identical dynamics up to a homeomorphic transformation of the state space. Such a transformation can be interpreted as a change of coordinates: evolving the state under $T$ and then applying $h$ yields the same result as first transforming the state via $h$ and then evolving it under $S$.

We now formalize the two architectural classes introduced in Section~2.

\begin{definition}\em[Invariant-law class $\mathcal C$]
A system belongs to the class $\mathcal C$ if it is a discrete-time dynamical system of the form $(\mathbb{R}^n,F)$, where $\mathbb{R}^n$ is endowed with the Euclidean topology and $F:\mathbb{R}^n\to\mathbb{R}^n$ is a continuous mapping. The dynamics are given by
\begin{equation}
x_{t+1}=F(x_t).
\label{eq:single_law}
\end{equation}
\end{definition}

\begin{definition}\em[Regime-dependent class $\mathcal R$]
A system belongs to the class $\mathcal R$ if it is governed by a finite family of continuous operators $\{F_s\}_{s\in\mathcal S}$ with $|\mathcal S|\ge 2$, and its dynamics are given by
$$
x_{t+1}=F_{s_t}(x_t),
$$
for an exogenous sequence $\{s_t\}$.
\end{definition}

In invariant-law systems, trajectories are generated by repeated iteration of a single operator. In contrast, regime-dependent systems evolve through ordered compositions of distinct operators associated with the realized sequence of regimes.

\medskip

To establish the structural separation between invariant-law systems and regime-dependent systems, we introduce the following auxiliary topological space:
$$
\mathcal X = \bigsqcup_{s \in \mathcal S} \left(\mathbb{R}^n \times \{s\}\right),
$$
endowed with the product topology (Euclidean on $\mathbb{R}^n$ and discrete on $\mathcal S$).

\begin{theorem}\em[Topological non-equivalence]
\label{thm:non_equivalence}
If $|\mathcal S|\ge 2$, no system in $\mathcal R$ can be topologically conjugate to a system in $\mathcal C$.
\end{theorem}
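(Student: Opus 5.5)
Since conjugacy is defined only between discrete-time dynamical systems $(X,T)$, the first step is to realize each system in $\mathcal R$ as such a pair. For the present theorem I interpret the regime-dependent system as the dynamical system $(\mathcal X, T)$ on the auxiliary space introduced just before the statement. Here $T$ acts on each sheet $\mathbb{R}^n\times\{s\}$ through the regime operator $F_s$ and moves the regime label according to a transition rule on $\mathcal S$. Concretely, I would take
\begin{equation*}
T(x,s)=\bigl(F_s(x),\sigma(x,s)\bigr),
\end{equation*}
where $\sigma$ encodes the regime update. Continuity of $T$ then requires $\sigma$ to be locally constant on each sheet, and each sheet is connected, so $\sigma(\cdot,s)$ is constant. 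The precise form of $T$ turns out to be irrelevant, however: the obstruction I will use is purely a property of the phase space.

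The core argument is the following. Suppose, for contradiction, that $(\mathcal X,T)$ is topologically conjugate to some $(\mathbb{R}^n,F)\in\mathcal C$. Then there is a homeomorphism $h:\mathcal X\to\mathbb{R}^n$ with $h\circ T=F\circ h$. Connectedness is a topological invariant, so $h$ would force $\mathcal X$ to be connected, since $\mathbb{R}^n$ with the Euclidean topology is connected. But $\mathcal X$ carries the product topology with the discrete topology on $\mathcal S$, so each sheet $\mathbb{R}^n\times\{s\}$ is simultaneously open and closed. Because $|\mathcal S|\ge 2$, choosing one $s_0$ gives a separation of $\mathcal X$ into the nonempty disjoint open sets $\mathbb{R}^n\times\{s_0\}$ and $\bigcup_{s\neq s_0}\mathbb{R}^n\times\{s\}$. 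Hence $\mathcal X$ is disconnected, so no homeomorphism $h$ exists, and the conjugacy equation \eqref{eq:conjugacy} cannot even be posed. This holds regardless of the operators $\{F_s\}$ and the continuous map $F$.

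The main obstacle is not this topological step, which is routine, but the modelling step. One must justify that $(\mathcal X,T)$ is the right dynamical-system representation of a member of $\mathcal R$. In particular, the exogenous sequence $\{s_t\}$ has to be absorbed into an autonomous map, and this setup must be stated explicitly. I would address it by noting that every representation of an $\mathcal R$-system as an autonomous pair must record the regime as part of the state, and that the regime coordinate lives in the discrete space $\mathcal S$. The state space is therefore a disjoint union of at least two copies of $\mathbb{R}^n$, whatever the transition rule for the regime label. If one instead insisted on an arbitrary exogenous sequence, I would pass to the shift-augmented space $\mathbb{R}^n\times\mathcal S^{\mathbb N}$. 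That space is also disconnected: $\mathcal S^{\mathbb N}$ is a Cantor-type space when $|\mathcal S|\ge 2$, and its cylinder sets are clopen. The same connectedness argument then applies verbatim.
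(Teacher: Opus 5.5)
Your proposal is correct and uses the same argument as the paper. The paper also identifies the phase space of an $\mathcal R$-system with $\mathcal X$ and rules out any homeomorphism onto $\mathbb{R}^n$, because $\mathcal X$ has $|\mathcal S|\ge 2$ connected components while $\mathbb{R}^n$ is connected; your explicit map $T$ and the shift-augmented variant on $\mathbb{R}^n\times\mathcal S^{\mathbb N}$ only make precise the modelling identification that the paper takes for granted.
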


\begin{proof}
Suppose, by contradiction, that there exists a system in $\mathcal C$ that is topologically conjugate to a regime-dependent architecture in $\mathcal R$. Then there exists a homeomorphism between $\mathcal X$ and $\mathbb{R}^n$.

However, $\mathbb{R}^n$ is connected, while the associated stratified space $\mathcal X$ is the disjoint union of the sets $\mathbb{R}^n\times\{s\}$, each connected, and therefore has $|\mathcal S|\ge 2$ connected components. Since the number of connected components is a topological invariant, $\mathcal X$ cannot be homeomorphic to $\mathbb{R}^n$. This contradicts the existence of a homeomorphism, and therefore rules out topological conjugacy.
\end{proof}

The result follows from a topological obstruction: invariant-law systems are defined on a connected state space, whereas $\mathcal X$ has $|\mathcal S| \ge 2$ connected components. Since the number of connected components is a topological invariant, $\mathcal X$ cannot be homeomorphic to $\mathbb{R}^n$, which rules out topological conjugacy.

Accordingly, regime-dependent dynamics cannot be represented as the iteration of a single invariant law of motion. Rather, they reflect the coexistence of multiple, structurally distinct transmission mechanisms whose interaction is inherently compositional and, in general, non-commutative.

A distinct but related question is whether the dependence on the regime can be eliminated dynamically. That is, even if the two classes are not topologically conjugate, one may ask whether regime-dependent dynamics can be represented by a single law of motion defined on a transformed state variable that does not depend on the regime.

Formally, the question is whether there exists a regime-blind reduction of the form
\begin{equation}
z_{t+1}=G(z_t),
\end{equation}
where $z=\phi(x)$ is independent of the regime index, such that the reduced dynamics replicate the regime-dependent evolution for all admissible realizations of the regime sequence.

To focus on non-degenerate representations that preserve the structure of the state, we restrict attention to injective mappings $\phi$. This ensures that the reduced state preserves the informational content of the original state, so that the reduction can be interpreted as a faithful representation rather than as a lossy aggregation. Under this requirement, the following result characterizes when a regime-blind reduction is possible.

\begin{theorem}\em[Dynamic irreducibility]
\label{the:DMR-irreducibilidad}
Suppose there exist maps $\phi : \mathbb{R}^n \to \mathbb{R}^n$ and $G : \mathbb{R}^n \to \mathbb{R}^n$ such that
\begin{equation}
\phi(F_s(x)) = G(\phi(x)) \quad \text{for all } x \in \mathbb{R}^n \text{ and all } s \in \mathcal S.
\end{equation}
Assume moreover that $\phi$ is injective. Then
$$
F_s = F_{s'} \quad \text{for all } s, s' \in \mathcal S.
$$
\end{theorem}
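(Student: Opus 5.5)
The argument is a direct pointwise comparison. It uses only the hypothesized conjugacy-type identity and the injectivity of $\phi$, so none of the topological machinery behind Theorem~\ref{thm:non_equivalence} is needed.

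Fix an arbitrary state $x\in\mathbb{R}^n$ and two regimes $s,s'\in\mathcal S$. The key observation is that the right-hand side of the assumed identity, $G(\phi(x))$, does not depend on the regime index. Applying the identity once with $s$ and once with $s'$ at the same point $x$ gives
\begin{equation*}
\phi(F_s(x)) = G(\phi(x)) = \phi(F_{s'}(x)).
\end{equation*}
So the images of $F_s(x)$ and $F_{s'}(x)$ under $\phi$ coincide.

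Next we invoke injectivity of $\phi$. From $\phi(F_s(x))=\phi(F_{s'}(x))$ it follows that $F_s(x)=F_{s'}(x)$. Since $x$ was arbitrary, $F_s$ and $F_{s'}$ agree on all of $\mathbb{R}^n$. Since the pair $s,s'$ was also arbitrary, $F_s=F_{s'}$ for all $s,s'\in\mathcal S$, which is the claim. No continuity or differentiability of $\phi$ or $G$ is used; the hypotheses as stated suffice.

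There is no substantive obstacle. The only point needing care is to make explicit that $G$ is a single regime-independent map, so that the common value $G(\phi(x))$ can be used to link the two regimes. After the proof, I would add an interpretive remark on the contrapositive: in any genuinely regime-dependent system (with $F_s\neq F_{s'}$ for some pair), no injective regime-blind reduction exists. This shows that the only way to obtain such a reduction is through a non-injective $\phi$, that is, a lossy aggregation that collapses the heterogeneity.
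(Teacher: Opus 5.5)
Your proof is correct and matches the paper's argument. Both pass through the regime-independent value $G(\phi(x))$ to get $\phi(F_s(x))=\phi(F_{s'}(x))$ pointwise, then use injectivity of $\phi$ to conclude $F_s=F_{s'}$.
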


\begin{proof}
Fix arbitrary $s', s'' \in \mathcal S$. By assumption,
$$
\phi(F_{s'}(x)) = G(\phi(x)) = \phi(F_{s''}(x)) \quad \text{for all } x.
$$
Since $\phi$ is injective, it follows that $F_{s'}(x) = F_{s''}(x)$ for all $x$, hence $F_{s'} = F_{s''}$.
\end{proof}

Taken together, Theorem~\ref{thm:non_equivalence} and Theorem~\ref{the:DMR-irreducibilidad} establish an architectural separation between invariant-law and regime-dependent systems. The two classes are neither topologically equivalent nor reducible to one another under regime-blind representations that preserve the informational content of the state. In particular, regime-dependent dynamics cannot be represented by a single propagation law.

\vspace{0.1in}

The distinction is therefore not a matter of parametrization or approximation, but of dynamical architecture.

\vspace{0.1in}

These results already imply that representations of the form
$$
x_{t+1}=F(x_t)
$$
are fundamentally limited in environments where propagation mechanisms differ across regimes. Approaches that model the economy through a single propagation law—possibly augmented with shocks or time variation—rely on representations of the form $x_{t+1}=F(x_t)$. The analysis above shows that this representation is not innocuous when propagation is structurally heterogeneous. In particular, the limitation is not merely a matter of representation: it concerns the ability to represent the set of admissible transitions generated by regime-dependent dynamics, which arise from ordered compositions of propagation operators and, in general, are non-commutative.

\vspace{0.1in}

The next section sharpens this limitation by establishing a structural impossibility result: whenever regime operators fail to commute, no invariant-law representation can reproduce the set of regime-admissible trajectories generated by the system.

\section{Impossibility of Invariant-Law Representation}
\label{sec:DMR-ImposibilityTheorem}

Section \ref{sec:structural_separation} established that multiple-regime systems are structurally distinct from invariant-law systems at the level of state-space architecture and regime-blind reduction. In particular, no regime-blind transformation can eliminate the dependence on the active propagation operator without collapsing structural heterogeneity.

We now show that this structural separation extends further. Even abstracting from the regime sequence and focusing on the family of admissible state transitions induced on $\mathbb{R}^n$, the evolution generated by multiple propagation operators cannot, in general, be represented as iteration of a single map. More precisely, when regime-dependent operators fail to commute, the semigroup of admissible ordered compositions cannot be generated by iteration of any $F:\mathbb{R}^n\to\mathbb{R}^n$.

As illustrated in the example of Section \ref{sec:DMR-KillingExample}, non-commutativity arises naturally from regime-dependent transmission mechanisms rather than from parametric variation within a single law of motion. This observation has a direct algebraic implication: the set of admissible transitions is generated by ordered compositions of heterogeneous operators, forming a semigroup that need not be cyclic.

\vspace{0.2in}

\noindent\textbf{Semigroup generated by regime operators.}

Let
$$
\mathcal{G}
=
\langle F_s : s\in\mathcal{S}\rangle
$$
denote the semigroup generated by finite compositions of the family
$\{F_s\}_{s\in\mathcal{S}}$.
That is,
$$
\mathcal{G}
=
\left\{
F_{s_k}\circ \cdots \circ F_{s_1}
:\;
k\ge 1,\; s_i\in\mathcal{S}
\right\}.
$$

Elements of $\mathcal{G}$ correspond to all admissible ordered compositions
of regime-dependent propagation operators. In particular, for any regime
sequence $\{s_t\}_{t\ge 0}$, the state at time $t$ satisfies
$$
x_t
=
F_{s_{t-1}}\circ \cdots \circ F_{s_0}(x_0),
$$
so the dynamic evolution of DMR is generated by ordered compositions of elements of $\mathcal{G}$, indexed by regime sequences.

Thus, reproducing the full set of regime-admissible trajectories requires reproducing all admissible compositions of regime operators. In particular, reproducing the full set of regime-admissible trajectories requires that the semigroup of iterates generated by $F$ coincides with $\mathcal{G}$.

\vspace{0.2in}

\noindent\textbf{Invariant-law semigroup.}

Given a single map $F:\mathbb{R}^n\to\mathbb{R}^n$, the semigroup generated by iteration is
$$
\langle F\rangle
=
\{F^k : k\ge 1\}.
$$
This is a cyclic semigroup: every element is a power of a single generator.

\vspace{0.1in}

The central structural question is therefore:

\begin{quote}
Can the semigroup $\mathcal{G}$ generated by multiple regime operators
coincide with a cyclic semigroup $\langle F\rangle$ generated by
a single invariant operator?
\end{quote}

If this were possible, then the entire dynamic architecture of DMR
could be represented by iteration of one map. In what follows, we show that this cannot occur whenever the regime operators fail to commute.

\vspace{0.1in}

\begin{lemma}\em[Cyclic semigroups are commutative]
Let $F:\mathbb{R}^n\to\mathbb{R}^n$.
Then for all integers $m,n\ge 1$,
$$
F^m\circ F^n
=
F^{m+n}
=
F^n\circ F^m.
$$
Hence every semigroup of the form $\langle F\rangle$ is commutative.
\end{lemma}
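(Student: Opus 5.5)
The plan is to reduce everything to associativity of function composition. I would first fix the convention $F^1=F$ and $F^{k+1}=F\circ F^k$ for $k\ge 1$. Then I would prove the auxiliary identity $F^m\circ F^n=F^{m+n}$ for all $m,n\ge 1$ by induction on $m$, keeping $n$ fixed. The notation in the statement reuses $n$ both for the dimension and for an exponent; in the argument I would write the exponents as $m,k$ to avoid confusion.

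For the base case $m=1$, the definition gives $F\circ F^k=F^{k+1}$ directly. For the inductive step, assume $F^m\circ F^k=F^{m+k}$. Then
\[
F^{m+1}\circ F^k=(F\circ F^m)\circ F^k=F\circ(F^m\circ F^k)=F\circ F^{m+k}=F^{m+k+1},
\]
where the second equality uses only that composition of maps $\mathbb{R}^n\to\mathbb{R}^n$ is associative. This yields the first equality in the statement.

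The second equality then follows by symmetry. Since $m+k=k+m$, applying the identity with the roles of $m$ and $k$ exchanged gives $F^k\circ F^m=F^{k+m}=F^{m+k}=F^m\circ F^k$.

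Finally, for commutativity of $\langle F\rangle=\{F^k:k\ge 1\}$, I would take two arbitrary elements $F^m$ and $F^k$ and apply the identity above, so they commute. By definition, $\langle F\rangle$ is then a commutative semigroup. It is closed under composition by the same identity, so it is indeed a semigroup.

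There is no real obstacle here. The only point needing care is the bookkeeping of the inductive definition, which ensures the argument uses associativity and never assumes commutativity of general maps. No continuity or other property of $F$ is needed.
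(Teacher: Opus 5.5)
Your proof is correct and follows the same route as the paper, which simply cites associativity of composition and the definition of iteration to obtain $F^m\circ F^n=F^{m+n}=F^n\circ F^m$; your induction on the first exponent just writes that appeal out in full. Your remark that the statement uses $n$ both for the dimension and for an exponent is apt, and renaming the exponent is harmless.
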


\begin{proof}
By associativity of composition and the definition of iteration,
$$
F^m\circ F^n
=
F^{m+n}
=
F^n\circ F^m.
$$
\end{proof}

\vspace{0.1in}

Thus, if $\mathcal{G}=\langle F\rangle$ for some map $F$,
then $\mathcal{G}$ must be commutative. This imposes a strong algebraic
restriction: any dynamic system generated by iteration of a single map necessarily admits commutative composition of operators.

\begin{theorem}\em[Impossibility of Invariant-Law Representation]
\label{thm:impossibility-strong}
Suppose there exist distinct regimes $s'\neq s''\in\mathcal{S}$
such that the operators $F_{s'}$ and $F_{s''}$ do not commute. Then there exists no map
$F:\mathbb{R}^n\to\mathbb{R}^n$
such that
$$
\mathcal{G} = \langle F\rangle.
$$
In particular, there exists no invariant-law representation $x_{t+1}=F(x_t)$ whose iterates reproduce all regime-admissible trajectories generated by $\{F_s\}_{s\in\mathcal{S}}$.
\end{theorem}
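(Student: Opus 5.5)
The plan is a proof by contradiction that combines the preceding Lemma with the fact that the generators themselves lie in $\mathcal{G}$. Suppose there is a map $F:\mathbb{R}^n\to\mathbb{R}^n$ with $\mathcal{G}=\langle F\rangle$. By the definition of $\mathcal{G}$ (compositions of length $k=1$), every generator belongs to $\mathcal{G}$, so $F_{s'}\in\mathcal{G}$ and $F_{s''}\in\mathcal{G}$. The assumed equality then gives integers $m,k\ge 1$ with $F_{s'}=F^m$ and $F_{s''}=F^k$.

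Next I apply the Lemma on cyclic semigroups to these two powers:
$$
F_{s'}\circ F_{s''}=F^m\circ F^k=F^{m+k}=F^k\circ F^m=F_{s''}\circ F_{s'}.
$$
This says $F_{s'}$ and $F_{s''}$ commute, which contradicts the hypothesis. Hence no such $F$ exists. More generally, $\mathcal{G}$ contains a non-commuting pair, so it is a non-commutative semigroup and cannot equal any commutative one, in particular any $\langle F\rangle$.

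The ``in particular'' clause rests on the reduction stated before the theorem. There the paper argues that reproducing all regime-admissible trajectories, for every $x_0$ and every regime sequence, requires that the semigroup of iterates of $F$ coincide with $\mathcal{G}$. I take this requirement as the operative meaning of ``invariant-law representation'' and conclude by the first part.

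The main obstacle is this interpretive step, not the algebra. A weaker notion of representation would only ask that each element of $\mathcal{G}$ be some iterate of $F$, that is, $\mathcal{G}\subseteq\langle F\rangle$. I would point out that the argument above uses only this inclusion, since it needs only that $F_{s'}$ and $F_{s''}$ are powers of $F$. So the contradiction also holds under the weaker reading, and the ``in particular'' statement is covered whichever convention is meant.
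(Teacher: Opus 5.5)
Your proof is correct and follows the paper's argument exactly: $F_{s'}$ and $F_{s''}$ lie in $\mathcal{G}=\langle F\rangle$, so they are powers of $F$, and powers of a single map commute, which contradicts the hypothesis. Your remark that only the inclusion $\mathcal{G}\subseteq\langle F\rangle$ is used is accurate (the paper's proof uses nothing more), and it makes explicit that the ``in particular'' clause also holds under that weaker reading of representation.
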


\begin{proof}
Since there exist $s', s''\in\mathcal{S}$ such that $F_{s'}$ and $F_{s''}$ do not commute, there exists $x\in\mathbb{R}^n$
such that
\begin{equation}
\label{eq:DMR-Fsx}
F_{s'}\circ F_{s''}(x)\neq F_{s''}\circ F_{s'}(x).
\end{equation}
Now, assume, for contradiction, that there exists a map $F:\mathbb{R}^n\to\mathbb{R}^n$
such that
$$
\mathcal{G}=\langle F\rangle.
$$

Since $\mathcal{G}$ is generated by the family $\{F_s\}_{s\in\mathcal{S}}$,
we have $F_s\in \mathcal{G}$ for every $s\in\mathcal{S}$. If
$\mathcal{G}=\langle F\rangle$, it follows that $F_s\in \langle F\rangle$ for every $s\in\mathcal{S}$. Hence there exists an integer
$m_s\in\mathbb{N}$ such that
$$
F_s = F^{m_s}.
$$

It follows that for $s', s''\in\mathcal{S}$,
$$
F_{s'}\circ F_{s''} = F^{m_{s'}}\circ F^{m_{s''}} = F^{m_{s'}+m_{s''}} = F^{m_{s''}}\circ F^{m_{s'}} = F_{s''}\circ F_{s'},
$$
so for all $x\in\mathbb{R}^n$ one has
$$
F_{s'}\circ F_{s''}(x) = F_{s''}\circ F_{s'}(x).
$$
This contradicts \eqref{eq:DMR-Fsx}, which establishes non-commutativity at $x$. Therefore, no such map $F$ can exist.
\end{proof}

Theorem~\ref{thm:impossibility-strong} shows that the structural distinction between invariant-law and regime-dependent systems extends to the level of dynamic representation. Invariant-law systems generate evolution through iteration of a single operator, inducing a cyclic and necessarily commutative semigroup of transitions. By contrast, regime-dependent systems generate evolution through ordered compositions of structurally distinct operators. Whenever these operators fail to commute, the order of composition becomes economically and dynamically relevant, and the resulting semigroup cannot be generated to iteration of a single map.

This limitation is structural. It does not arise from functional form assumptions or from the choice of state variables, but from the algebraic organization of admissible transitions. In particular, the dependence of outcomes on the sequence of regimes—captured by non-commutativity—cannot be encoded within any invariant propagation law.

As a consequence, any representation of the form
$$
x_{t+1} = F(x_t)
$$
fails to capture a fundamental dimension of the dynamics whenever regime-dependent propagation mechanisms are present. In such environments, invariant-law formulations do not merely approximate the underlying system but misrepresent its dynamic structure by suppressing the role of ordered composition.

\section{Conclusion}
\label{sec:DMR-Conclusion}

This paper introduced Dynamic Macroeconomics with Multiple Regimes (DMR) as a structural reformulation of dynamic macroeconomic analysis. The central departure is architectural. Instead of representing the economy as evolving under a single continuous law of motion defined on a connected state space, DMR treats economic evolution as governed by multiple structurally distinct propagation operators acting on $\mathbb{R}^n$, whose ordered composition is indexed by regime sequences.

This implies that the use of invariant-law representations entails a structural restriction: it excludes a class of admissible economic dynamics that arise under regime-dependent propagation. As a result, analyses conducted within an invariant-law framework necessarily fail to capture qualitative features of economic evolution that are intrinsic to environments with structural change.

Three structural results were established. First, invariant-law systems and regime-dependent systems belong to distinct topological conjugacy classes, reflecting a fundamental difference in state-space architecture. Second, the regime coordinate is dynamically irreducible: it cannot be eliminated through continuous reduction without collapsing heterogeneous propagation mechanisms. Third, whenever regime operators fail to commute, the semigroup generated by their ordered compositions cannot be cyclic; as a consequence, no invariant-law representation of the form $x_{t+1}=F(x_t)$ can reproduce the full set of regime-admissible trajectories through the iteration of a single invariant law of motion.

Taken together, these results establish a sharp structural separation between invariant-law and regime-dependent systems. Regime dependence is not a refinement of a unified law of motion, but a fundamentally different dynamic architecture. Once structural multiplicity is admitted, the economy can no longer be represented as the iteration of a single mapping. Instead, its evolution is governed by ordered compositions of heterogeneous operators, indexed by regime sequences.

In particular, the impossibility result identifies a fundamental restriction in standard macroeconomic reasoning. Conventional frameworks rely on the assumption that economic dynamics can be represented by a single invariant law of motion. The analysis above shows that this restriction is not without loss of generality. Whenever propagation operators fail to commute, there exists no mapping $F:\mathbb{R}^n\to\mathbb{R}^n$ capable of reproducing the set of regime-admissible trajectories. This limitation is structural: it does not arise from approximation or lack of flexibility within a given model class, but from the impossibility of reducing ordered compositions of heterogeneous operators to the iteration of a single map. This limitation arises because such compositions are, in general, non-commutative.

This shift has direct implications for stability analysis. In invariant-law systems, stability is defined with respect to a single propagation mechanism. Under regime dependence, stability is a property of families of operators and their admissible compositions. Dynamics that are stable within each regime in isolation may generate instability when composed across regimes. Stability must therefore be understood as a property of the interaction structure among propagation mechanisms.

The implications extend to policy analysis. In a single-law environment, stabilization is defined relative to the control of a unique propagation mechanism. Under structural multiplicity, policy must be evaluated with respect to admissible regime sequences and their induced compositions. There is no unified mapping relative to which stabilization can be defined. Instruments that stabilize the economy within a given regime do not necessarily prevent instability arising from the interaction of distinct propagation mechanisms. Policy effectiveness is therefore inherently regime-contingent.

More broadly, the distinction developed in this paper concerns the structure of the law of motion itself. In environments subject to structural regime
change, the relevant dynamic object is operator composition rather than iteration. This shift defines a broader class of admissible macroeconomic systems, but more importantly, it establishes that invariant-law models constitute a structurally restricted subset within that class.

Dynamic Macroeconomics with Multiple Regimes provides a framework in which structural change is not treated as a perturbation of a fixed propagation mechanism, but as an intrinsic component of macroeconomic dynamics. In this sense, the contribution of the paper is not to extend existing models, but to identify a structural limitation of invariant-law representations and to provide the foundations of an alternative dynamic architecture.

\section*{Research Agenda}
\addcontentsline{toc}{section}{Research Agenda}

The framework developed in this paper provides the structural
infrastructure for a broader research program on regime-dependent
macroeconomic dynamics. Several directions follow naturally from
the architecture established here, each corresponding to a class
of macroeconomic phenomena for which invariant-law representations
are structurally insufficient.

A first direction concerns macro-financial stability and the
endogenous emergence of crises. When financial leverage governs
the activation of distinct propagation operators, the boundary
separating stable trajectories from those triggering a regime
transition becomes a dynamic object generated endogenously by
the system, rather than an exogenously imposed constraint. The
analysis of this fragility frontier provides a structural foundation
for macro-prudential policy.

A second direction concerns financial collapse and recovery.
Once the economy enters a regime governed by a distinct
propagation operator, recovery cannot be characterized as
asymptotic behavior under the prevailing dynamics, but must
be formulated as a controlled reachability problem in which
policy alters the effective law of motion. This perspective
yields a structural characterization of crisis management,
including threshold conditions for effective intervention
and trade-offs between the cost and the speed of recovery.

A third direction concerns optimal policy under endogenous
structural transition risk. When fragility raises the probability
of transition toward a low-efficiency regime and continuation
values differ across regimes, the planner's optimality condition
acquires a structural component absent from invariant-law
benchmarks. This mechanism provides a structural reinterpretation
of preventive policy, persistent low growth, and poverty traps
as consequences of regime-dependent propagation rather than
of local non-convexities within a single accumulation operator.

A fourth direction concerns sovereign risk and debt dynamics
under regime-dependent propagation, where the transmission
between fiscal variables, financing conditions, and macroeconomic
outcomes may differ across structural environments. A fifth
direction concerns monetary policy transmission under structural
regime change, where the mapping from policy instruments to
macroeconomic outcomes is itself regime-specific rather than
parametrically variable within a unified mechanism.

These directions share a common structural feature: in each
case, the relevant economic phenomena cannot be represented
as the iteration of a single invariant law of motion, but
arise from the ordered composition of heterogeneous propagation
operators. The framework developed here provides the formal
infrastructure within which such phenomena admit a unified
analytical treatment. Several of these directions are developed
in companion work.

\bibliography{RDM_references}

\end{document}